\newtheorem{problem}{Problem}
\newtheorem{prop}{Proposition}
\newtheorem{remark}{Remark}
\newtheorem{definition}{Definition}
\newtheorem{assumption}{Assumption}
\newtheorem{theorem}{Theorem}
\newcommand{\bx}{{\boldsymbol{x}}}
\newcommand{\bu}{{\boldsymbol{u}}}
\newcommand{\by}{{\boldsymbol{y}}}
\newcommand{\bz}{{\boldsymbol{z}}}
\newcommand{\bv}{{\boldsymbol{v}}}
\newcommand{\argmin}{\operatorname{argmin}}    
\title{\LARGE \bf
Neural Koopman Control Barrier Functions for Safety-Critical Control of Unknown Nonlinear Systems
}
\author{Vrushabh Zinage \and Efstathios Bakolas
\thanks{This research has been supported  in part by NSF award CMMI-1937957.}
\thanks{Vrushabh Zinage (graduate student) and Efstathios Bakolas (Associate Professor) are with the Department of Aerospace Engineering and Engineering Mechanics,
The University of Texas at Austin, Austin, Texas 78712-1221, USA, {Emails:\tt\small vrushabh.zinage@utexas.edu; bakolas@austin.utexas.edu}}
}
\begin{document}

\bibliographystyle{IEEEtran} 

\maketitle
\thispagestyle{empty}
\pagestyle{empty}

\begin{abstract}
We consider the problem of synthesis of safe controllers for nonlinear systems with unknown dynamics using Control Barrier Functions (CBF). We utilize Koopman operator theory (KOT) to associate the (unknown) nonlinear system with a higher dimensional bilinear system and propose a data-driven learning framework that uses a learner and a falsifier to simultaneously learn the Koopman operator based bilinear system and a corresponding CBF. We prove that the learned CBF for the latter bilinear system is also a valid CBF for the unknown nonlinear system by characterizing the $\ell^2$-norm error bound between these two systems. We show that this error can be partially tuned by using the Lipschitz constant of the Koopman based observables. The CBF is then used to formulate a quadratic program to compute inputs that guarantee safety of the unknown nonlinear system. Numerical simulations are presented to validate our approach.

\begin{keywords}
Computational methods, Modeling, Robotics
\end{keywords}
\end{abstract}

\section{Introduction}
In this paper, we consider the problem of synthesizing controllers that render a nonlinear system with unknown dynamics safe, that is, controllers that guarantee that a certain set will be forward invariant relative to the (unknown) nonlinear system (we refer to the latter set as the \textit{safe set}). In particular, we consider the problems of 1) representing the unknown nonlinear system as a lifted Koopman based bilinear system and 2) learning a valid Control Barrier Function (CBF) for the unknown system that can induce a controller that can guarantee safety. 

\textit{Literature review:} In recent years, Koopman operator theory (KOT) has emerged as a popular tool to analyze and control nonlinear systems in applications in various fields, including robotics and aerospace engineering \cite{chen2020koopman_attitude,zinage2021koopman_rigid_body,mamakoukas2019local_koopman_soft_robots,rowley2016low_koopman_fluid_1,tu2013dynamic_koopman_edmd_connection_2}. KOT enables the transformation of a controlled nonlinear system to an infinite-dimensional bilinear system. However, the control design problem based on the infinite-dimensional bilinear system poses practical and computational challenges. For this reason, methods such as the Extended Dynamic Mode Decomposition (EDMD) are utilized to approximate the infinite-dimensional linear system corresponding to a control-free nonlinear dynamical system to a finite-dimensional linear system. The lifted state (that is, the state of the lifted model) is determined by the so-called Koopman-based observables which are functions of the original states of the nonlinear system. However, the computation of these observables is a major challenge in general. Recent methods either 1) guess the observables by identifying certain terms in the nonlinear dynamics \cite{zinage2021far}, 2) derive the set of observables for particular classes of nonlinear systems such as those describing attitude dynamics and rigid body motion \cite{chen2020koopman_attitude,zinage2021koopman_rigid_body,zinage2021koopman_quadrotor} or 3) use machine learning tools to learn the observables \cite{lusch2018deep_nature}. Moreover, for a selected set of observables there is no guarantee that the lifted linear model constructed by using the EDMD algorithm can approximate the controlled nonlinear dynamics accurately. To that end, \cite{kaiser2021data_driven_identification_koopman} provides an approach for the selection of the observables from a given dictionary of ``guessed'' observables.
\cite{abraham2019active_learning} constructs a lifted linear system corresponding to a nonlinear system, such a quadrotor, from guessed observables using the EDMD algorithm and design an LQR type controller based on the lifted system representation of the nonlinear system. \cite{mamakoukas2019local_koopman_soft_robots} considers modelling and control of soft robots using KOT based methods. \cite{mamakoukas2021derivative} constructs a Koopman based lifted linear system by characterizing the observables as higher derivatives of the underlying nonlinear dynamics. 

Two major requirements for a controlled nonlinear system is closed-loop stability and safety which can be guaranteed by designing controllers that rely on the notions of Control Lyapunov Functions (CLF's) and Control Barrier Functions (CBF's), respectively. Safety is of prime importance for many engineering applications. To this end, forward invariance of a set relative to a system can be thought as the dual property for safety of the latter system. The notion of CBF, which was first introduced in \cite{lamport1977proving_first_cbf}, can be used to design a feedback control law that would render the safety set forward invariant relative to the closed-loop system. Control design methods based on Control Barrier functions (CBF's) have proved to be effective tools to guarantee safety for nonlinear systems. Although there are many methods to synthesize CBF's, most of them assume that either an exact or an approximate model of the underlying nonlinear dynamics is known a priori. \cite{robey2020learning_cbf_expert_known} learns a CBF from safe trajectories generated by an expert but assumes that the control affine nonlinear dynamics is known a priori. \cite{srinivasan2020synthesis_cbf} provides a learning framework for synthesis of CBF's from sensor data and uses a vector classifier function to characterize the barrier function. \cite{folkestad2020data_koopman_cbf} uses KOT to propagate the nonlinear system via a lifted linear system for faster computation of invariant safety sets. However, \cite{folkestad2020data_koopman_cbf} assumes that the CBF and the Koopman based observables are known. Further, \cite{cheng2019end_valid_cbf_1,wang2018safe_valid_cbf_2,taylor2020learning_valid_cbf_3,taylor2020control_valid_cbf_4} assume that a valid CBF is known. 
\cite{jagtap2020control_gaussian_cbf_unkown} considers synthesizing CBF's for unknown nonlinear system using Gaussian processes, under the assumption that the drift term of the control affine system is known. In addition, in \cite{jagtap2020control_gaussian_cbf_unkown}, the CBF is characterized by a polynomial function with unknown coefficients that need to be learned. Recently, there has been a plethora of results that leverage neural networks and Sequential Modulo Theories (SMT) solvers to learn certificate functions such as CLF using a learner and a falsifier \cite{chang2019neural_lyapunov,abate2020formal,zinage2022neural}.

\textit{Main contributions:} The main contribution of the paper is two-fold. First, we simultaneously learn a Koopman based bilinear model and a corresponding valid CBF by using a learner and a falsifier. This is in contrast with most approaches in the field that assume that a valid CBF and / or the dynamics of the nonlinear system are known. Second, we show that the learned CBF for the learned bilinear system also acts as a valid CBF for the unknown nonlinear system. Further, we use the learned bilinear model and the corresponding CBF to compute control inputs guaranteeing safety for the unknown nonlinear system via the solution of a quadratic program (QP). We verify our approach for a collision avoidance problem with a differential drive robot.

\textit{Structure of the paper}: Section \ref{sec:prelimaries} discusses preliminaries and Section \ref{sec:problem_statement} describes the problem statement. Section \ref{sec:koopman_based_cbf} discusses the Koopman based CBF and Section \ref{sec:learning} presents the learning framework to simultaneously learn the Koopman based bilinear system and a corresponding valid CBF. Section \ref{sec:results} discusses the numerical simulations followed by concluding remarks in Section \ref{sec:conclusions}.
\section{Preliminaries\label{sec:prelimaries}}
Consider a nonlinear control-affine system:
\begin{align}
    \dot{\boldsymbol{x}}=F(\bx,\bu) &= f(\boldsymbol{x})+g(\boldsymbol{x})\boldsymbol{u},\;\;\; \boldsymbol{x}(0)=\boldsymbol{x}_{\text{init}},
   \label{eqn:nonlinear_system}
\end{align}
where $f$ and $g$ are Lipschitz continuous functions, $\boldsymbol{x}\in\mathcal{X}\subset\mathbb{R}^n$ is the state and $\boldsymbol{u}\in\mathcal{U}\subset\mathbb{R}^m$ is the control input. Although we do not consider hard input or state constraints, we will assume that the sets $\mathcal{X}$ and $\mathcal{U}$ are compact (the latter assumption is needed to avoid some practical issues that can arise when solving the learning problem over an unbounded domain). In addition, we assume that the nonlinear dynamics governed by $f$ and $g$ are unknown. 
We will also assume that the zero vector is the unique equilibrium point of \eqref{eqn:nonlinear_system}. To keep the notation simple, we omit the dependence of the state and the control input on time. A discrete-time version of the nonlinear system \eqref{eqn:nonlinear_system}, which can be obtained by using, for instance, the fourth order Runga-Kutta discretization scheme, is given by
\begin{align}
    \boldsymbol{x}_{k+1}=\ell(\boldsymbol{x}_k,\boldsymbol{u}_k),\quad \bx_0=\boldsymbol{x}_{\text{init}},
    \label{eqn:discrete_nonlinear_dynamics}
\end{align}
where $\bx_k=\bx(k\Delta t)+O(\Delta t^4)$, $\bu_k=\bu(k\Delta t)$ and $\Delta t>0$ is the sampling time period used in the Runga-Kutta discretization scheme.
\begin{assumption}
\normalfont The vector field $F$ is Lipschitz continuous in $\mathcal{X}\times\mathcal{U}$ and the Lipschitz constant $K_F$ of the unknown nonlinear system \eqref{eqn:nonlinear_system} is known a priori, that is,
\begin{align}
    \|F(\bx,\bu)-F(\by,\bv)\|\leq K_F \|(\bx,\bu)-(\by,\bv)\|
\end{align}
for all $(\bx,\bu),(\by,\bv)\in\mathcal{X}\times\mathcal{U}$.
\end{assumption}
\begin{remark}
\normalfont There is no loss of generality in assuming that $F$ is globally Lipschitz continuous given that local Lipschitz continuity over a compact set implies global Lipschitz continuity  \cite{heinonen2005lectures}. In contrast to \cite{jagtap2020control_gaussian_cbf_unkown}
which assumes that the function $g$ is known, in this paper we do not make such assumptions.
\end{remark}

\subsection{Control Barrier Functions}
\normalfont Consider a continuously differentiable function $h:\mathcal{X}\rightarrow\mathbb{R}$ for the control-affine system \eqref{eqn:nonlinear_system} that satisfies 
\begin{align}
    & h(\boldsymbol{x})\geq 0,\quad\forall\boldsymbol{x}\in\mathcal{X}_0~~\textrm{and}~~ h(\boldsymbol{x})< 0,\quad\forall\boldsymbol{x}\in\mathcal{X}_d
    \label{eqn:cbf_conditions}
\end{align}
where $\mathcal{X}_0$ denotes the safe set and $\mathcal{X}_d$ the set of states that must be avoided. The function $h$ is said to be a Control Barrier Function (CBF) when it satisfies
the following definition:
\begin{definition}
\normalfont
Let $\mathcal{X}_0\subset\mathbb{R}$ be the 0-superlevel set of $h:\mathcal{X}\rightarrow \mathbb{R}$, then $h$ is a Control Barrier Function (CBF) for \eqref{eqn:nonlinear_system}, if there exists a class $\mathcal{K}_\infty$ function\footnote{A function $\alpha(\cdot):\mathbb{R}\rightarrow\mathbb{R}$ belongs to the class of $\mathcal{K}_\infty$ functions if it is strictly increasing and in addition, $\alpha(0)=0$ and $\underset{r\rightarrow \infty}{\lim} \alpha(r) = \infty$\cite{p:khalil1996nonlinear}.} $\alpha:\mathbb{R}\rightarrow\mathbb{R}$ such that
\begin{align}
    \underset{\bu\in\mathcal{U}}{\sup}\;L_fh(\boldsymbol{x})+L_gh(\boldsymbol{x})\boldsymbol{u}\geq-\alpha(h(\boldsymbol{x})),
\end{align}
where $L_fh(\bx)$ and $L_gh(\bx)$ are the Lie derivatives of $h(\bx)$ with respect to $f$ and $g$, respectively. 
\end{definition}

Let us denote the following (state-dependent) set of inputs:
\begin{align} 
K_{\text{CBF}}(\boldsymbol{x})&=\{\boldsymbol{u}\in\mathcal{U}: \nonumber \\
 &~~~~~L_fh(\boldsymbol{x})+L_gh(\boldsymbol{x})\boldsymbol{u}(\bx)\geq-\alpha(h(\boldsymbol{x}))\}.   
 \end{align}
Then, 
the selection of an input $\bu(t)$ from $K_{\text{CBF}}(\boldsymbol{x}(t))$ at each time $t\geq 0$ will ensure that the set $\mathcal{X}_0$ will be forward invariant relative to \eqref{eqn:nonlinear_system}. In other words, given an initial condition $\boldsymbol{x}_{\text{init}}\in\mathcal{X}_0$, the solution to \eqref{eqn:nonlinear_system} $\boldsymbol{x}(t)$ will also remain in $\mathcal{X}_0$ if $\bu(t)\in K_{\text{CBF}}(\bx(t))$ for all $ t\geq 0$. Further, given a nominal controller $k(\boldsymbol{x})$, a feedback controller $\bu_S(\boldsymbol{x})$ can be designed that can guarantee that the system will remain in the safe set $\mathcal{X}_0$ by solving the following quadratic program (QP):
\begin{subequations}
\begin{align}
  \textbf{CBF-QP}\quad &\bu_S(\boldsymbol{x}):=\underset{\bu\in\mathcal{U}}{\argmin}\quad\|\bu-k(\boldsymbol{x})\|^2\\
   &\text{s.t.} \quad L_fh(\boldsymbol{x})+L_gh(\boldsymbol{x})\bu\geq-\alpha(h(\boldsymbol{x})).
   \label{eqn:cbf_constraint}
\end{align}
 \label{eqn:quadratic_problem}
\end{subequations}
Because $f$ and $g$ are unknown, one cannot compute the solution of QP $\eqref{eqn:quadratic_problem}$ and use it to design a feedback controller guaranteeing safety for the unknown nonlinear system \eqref{eqn:nonlinear_system}. Further, if the dynamics of the system \eqref{eqn:nonlinear_system} were known, the safe controller $\bu_S(\bx)$ would not be implementable as the constraint \eqref{eqn:cbf_constraint} must be satisfied for uncountable $t\in[0,T]$ where $T$ is the total time horizon. Therefore, we consider finite time intervals $t_i=i\Delta t,\;t_K=T$
and compute control inputs at those time intervals. This might lead to violation of constraint \eqref{eqn:cbf_constraint}. For this reason, we make the following assumption:
\begin{assumption}
\normalfont The discretization step $\Delta t>0$ is sufficiently small so that satisfaction of constraint \eqref{eqn:cbf_constraint} at $t_k$ implies constraint satisfaction \eqref{eqn:cbf_constraint} for all $t \in[t_k,t_k+\Delta t]$.
\end{assumption}
\subsection{Koopman Operator Theory}
Given a nonlinear system, $\dot{\boldsymbol{x}}=f(\bx)$, the Koopman operator $\mathcal{K}$ is a linear infinite-dimensional operator which linearly propagates a set of observables $\Phi(\bx)$ which are functions of states. In other words, $\mathcal{K}[\Phi(\boldsymbol{x})]=\Phi\circ\mathcal{F}_t$, where $\Phi(\boldsymbol{x}):\mathbb{R}^n\rightarrow\mathbb{R}^N$, $N>n$,  $\circ$ is the composition operator and $\mathcal{F}_t$ is the flow map of the uncontrolled dynamics $\dot{\boldsymbol{x}}=f(\boldsymbol{x})$ which is given by
\begin{align}
    \mathcal{F}_t(\boldsymbol{x}(t_0))=\boldsymbol{x}\left(t_{0}\right)+\int_{t_{0}}^{t_{0}+t} {f}(\boldsymbol{x}(\tau)) d \tau.
\end{align}
For the nonlinear control affine system \eqref{eqn:nonlinear_system}, the time derivative of $\Phi(\bx)$ along the trajectories of \eqref{eqn:nonlinear_system} is given by
\begin{align}
    \dot{\Phi}(\boldsymbol{x})&=\nabla_\bx\Phi(\boldsymbol{x})[f(\boldsymbol{x})+g(\boldsymbol{x})\boldsymbol{u}]\nonumber\\
    &=\nabla_\bx\Phi(\boldsymbol{x})f(\boldsymbol{x})+\nabla_\bx\Phi(\boldsymbol{x})g(\boldsymbol{x})\boldsymbol{u}\nonumber\\
    &={{\mathcal{K}}}\Phi(\boldsymbol{x}) + \nabla_\bx\Phi(\boldsymbol{x})\sum_{i=1}^mg_i(\boldsymbol{x}){u}_i.
    \label{eqn:continous_bilinear_system}
\end{align}
Further, it is assumed that $\nabla_\bx\Phi(\boldsymbol{x})g_i(\boldsymbol{x})$ belongs to the span of $\Phi(\boldsymbol{x})$. In other words, there exists a constant matrix $C_i$ such that   $\nabla_\bx\Phi(\boldsymbol{x})g_i(\boldsymbol{x})=C_i\Phi(\boldsymbol{x})$. The previous assumption 
is reasonable as the authors of \cite{bruder2021advantages} have shown that for sufficiently large number of Koopman observables, the system governed by \eqref{eqn:nonlinear_system} can be equivalently modelled as a Koopman Bilinear Form (KBF) as follows:
 \begin{align}
     \dot{\boldsymbol{z}}={{\mathcal{K}}}\boldsymbol{z}+\sum_{i=1}^mC_i\boldsymbol{z}u_i,\;(=:\psi(\bx,\boldsymbol{u})),
     \label{eqn:continous_bilinear_system}
 \end{align}
where $\boldsymbol{z}:=\Phi(\boldsymbol{x})$. To preserve the bilinearity, we apply Euler discretization to \eqref{eqn:continous_bilinear_system} to obtain the following discrete-time bilinear control system:
 \begin{align}
     \dot{\boldsymbol{z}}_{k+1}={{\mathcal{K}_d}}\boldsymbol{z}_k+\sum_{i=1}^mD_i\boldsymbol{z}_ku_i,\;(=:\psi_d(\bx,\boldsymbol{u})),
     \label{eqn:discrete_bilinear_system}
 \end{align}
 where $\mathcal{K}_d:=(\mathcal{K} \Delta t+I)$, $D_i=C_i\Delta t$ and $\Delta t>0$ is the sampling time period.

\section{Problem statement\label{sec:problem_statement}}
The goal of the problem considered in this paper is to characterize a valid CBF for the unknown nonlinear system \eqref{eqn:nonlinear_system} and design a corresponding feedback controller that will guarantee safety. Since \eqref{eqn:nonlinear_system} is unknown, computing control inputs that render \eqref{eqn:nonlinear_system} safe using the QP formulation is practically impossible.
Therefore, we consider the following modified problem:
\begin{problem}
\normalfont Given the data snapshots of state-input pairs $\{\bx_k,\bu_k\}^{N_d}_{k=1}$ where $N_d$ is the number of snapshots, simultaneously learn the Koopman based bilinear model for \eqref{eqn:nonlinear_system} and a valid CBF that satisfies \eqref{eqn:cbf_conditions} for the learned model. Further, learn a valid CBF for the unknown nonlinear system \eqref{eqn:nonlinear_system} and compute a control input $\bu(t)$ that will render the nonlinear system \eqref{eqn:nonlinear_system} safe.
\end{problem}
\section{Koopman based CBF \label{sec:koopman_based_cbf}}
The equivalent form of the CBF-QP \eqref{eqn:quadratic_problem} in the lifted space $\Phi(\bx)$ can be represented as follows:
\begin{subequations}
\begin{align}
   &\bu_S(\boldsymbol{x}):=\underset{\bu\in\mathcal{U}}{\argmin}\quad\|\bu-k(\boldsymbol{x})\|^2\\
   &\text{s.t.} \quad
   \frac{\partial h(\bx)}{\partial \bz}\nabla_\bx \Phi(\bx)F(\bx,\bu)\geq -\alpha(h(\boldsymbol{x})). 
\end{align}
   \label{eqn:koopman_equivalent_quadratic_problem}
\end{subequations}
Since $f$ and $g$ are unknown, one cannot compute the control inputs $\bu_S(\bx)$ that would guarantee safety of the unknown nonlinear system \eqref{eqn:nonlinear_system} by solving the constrained convex QP (Quadratic program) \eqref{eqn:koopman_equivalent_quadratic_problem}. To this end, we use the learned Koopman based bilinear model (KBF) \eqref{eqn:continous_bilinear_system} and the valid CBF for the learned model (as described in Section \ref{sec:learning}) to solve the following modified QP:
\begin{subequations}
\begin{align}
  \bu_S(\bz):=&\underset{\bu\in\mathcal{U}}{\argmin}\quad\|\bu-k_L(\bz)\|^2\\
   \text{s.t.} \quad&\frac{\partial h(\bx)}{\partial \bz}\Big({{\mathcal{K}}}\boldsymbol{z}+\sum_{i=1}^mC_i\boldsymbol{z}u_i\Big)\geq-\alpha(h(\bx))
\end{align}
\label{eqn:koopman_quadratic_problem}
\end{subequations}
where $k_L(\bz)$ is the nominal feedback controller for the bilinear system \eqref{eqn:continous_bilinear_system}. The control input computed by solving \eqref{eqn:koopman_quadratic_problem} is then fed back to the original nonlinear system \eqref{eqn:nonlinear_system} to guarantee safety. In this paper, we choose $\alpha(y)=\lambda y$, where $\lambda>0$.
\section{Learning Koopman based bilinear model and a Control Barrier function\label{sec:learning}}
Consider a dataset $\mathcal{D}$ which consists of state-input pairs i.e., $\mathcal{D}=\{(\bx_k,\bu_k)\in\mathcal{X}\times \mathcal{U}: k\in\{1,2\dots N_d\}\}$. We construct feedforward neural networks for the encoder (observables) $\Phi:\mathbb{R}^n\rightarrow\mathbb{R}^N$, decoder $\Phi^{-1}:\mathbb{R}^N\rightarrow\mathbb{R}^n$ (transforms the lifted state back to the original state) and the CBF $h:\mathcal{X}\rightarrow\mathbb{R}$. To simultaneously learn the Koopman based bilinear system \eqref{eqn:continous_bilinear_system} and a valid CBF, the following loss function is minimized
\begin{align}
    \mathcal{L}(\Phi,h)=\beta_1\mathcal{L}_{\text{dyn}}+\beta_2\mathcal{L}_{\text{recons}}+\beta_3\mathcal{L}_{\text{barr}},
    \label{eqn:total_loss}
\end{align}
where $\beta_2$ and $\beta_3$ are positive real constants. The individual loss functions in \eqref{eqn:total_loss} are defined as follows:

The \textit{dynamics loss} $\mathcal{L}_{\text{dyn}}$ captures the error between the state $\boldsymbol{z}_{k+1}$ and the state propagated from $\boldsymbol{z}_k$ to $\boldsymbol{z}_{k+1}$ via the Koopman based lifted bilinear system \eqref{eqn:discrete_bilinear_system}. In particular,
\begin{align}
    \mathcal{L}_{\text{dyn}}=\sum_{k=1}^{N_d-1}\|\Phi(\boldsymbol{x}_{k+1})-\mathcal{K}_d\Phi(\boldsymbol{x}_{k})-\sum_{j=1}^{{m}}[\boldsymbol{u}_{i}]_j D_j {\Phi}\boldsymbol{u}_k\|_2^2
\end{align}
where $\bu_i$ denotes the $i^\text{th}$ data point and $[\bu_i]_j$ denotes the $j^\text{th}$ component of input $\bu_i$.
The matrices $\mathcal{K}_d$ and $D:=[D_1,\;D_2,\dots,D_m]$ are updated at every epoch using the EDMD algorithm \cite{tu2013dynamic_koopman_edmd_connection_2} as follows:
\begin{align}
\left[\begin{array}{lll}
\tilde{{K}}_d & D_1 \dots & D_{m}
\end{array}\right]=\Gamma \eta^{T}\left(\eta \eta^{T}\right)^{-1}
\label{eqn:update_matrices}
\end{align}
where the matrices $\eta$ and $\Gamma$ are given by

{\small
\begin{align}
\small
&\eta=\left[\begin{array}{lll}
\left(\begin{array}{c}
1 \\
\boldsymbol{u}_{1}
\end{array}\right) \otimes\Phi(\boldsymbol{x}_{1}), & \cdots &, \left(\begin{array}{c}
1 \\
\boldsymbol{u}_{N_d-1}
\end{array}\right) \otimes\Phi(\boldsymbol{x}_{N_d-1})
\end{array}\right],\nonumber\\
&\Gamma=[\Phi(\boldsymbol{x}_{2}),\;\dots,\Phi(\boldsymbol{x}_{N_d}))]
\end{align}
}
where $\otimes$ denotes the Kronecker product.
Note that for a given $\Phi$, the matrices $\mathcal{K}_d$ and $D$ are updated optimally at every epoch. 


The \textit{reconstruction loss} $\mathcal{L}_{\text{recons}}$ is used to penalize the error between the lifted state $\boldsymbol{z}_k$ and the state which is obtained by applying the decoder and then the encoder to the lifted state. In particular,
\begin{align}
    \mathcal{L}_{\text{recons}}=\sum_{k=1}^{N_d}\|\boldsymbol{x}_k-\Phi^{-1}(\Phi(\boldsymbol{x}_k))\|_2^2
\end{align}

The role of \textit{control barrier loss} $\mathcal{L}_{\text{barr}}$ (inspired by \cite{chang2019neural_lyapunov}) is to penalize any violation of the conditions in \eqref{eqn:cbf_conditions} by the (candidate) CBF $h(\boldsymbol{x})$ and is defined as follows:
\begin{align}
    \mathcal{L}_{\text{barr}}&=\frac{1}{N_d}\sum_{i=1}^{N_d}\max \left(0,-h(\bx^a_i)\right)+\max \left(0,  h(\bx^b_i)\right)  \nonumber\\
    &~~~+\max \left(0, \nabla_\bx h(\bx_i)\psi(\Phi(\bx_i),\bu_i)\right)
    \label{eqn:barrier_loss_function}
\end{align}
where $\bx_i^a\in\mathcal{X}_0$, $\bx_i^b\in\mathcal{X}_d$ and $\bx_i\in\mathcal{X}$. 
The first term in \eqref{eqn:barrier_loss_function} corresponds to the requirement that $h(\boldsymbol{x})\geq 0$ for $\boldsymbol{x}\in\mathcal{X}_0$, the second term to $h(\boldsymbol{x})<0$ for $\bx\in\mathcal{X}_d$ and the last term to the condition \eqref{eqn:cbf_constraint} be satisfied. However, given a finite number of points in $\mathcal{X}_0$, $\mathcal{X}_d$ and $\mathcal{X}$, minimizing the control barrier loss function would still not guarantee that $h(\boldsymbol{x})$ is a valid CBF. To this end, we first introduce a finite input set $U=\{\bu_1^c,\;\bu_2^c,\dots \bu_R^c\}$ where $\bu_i^c\in\mathcal{U}$ for all $i\in\{1,2,\dots,R\}$ and $R\in \mathbb{N}_{>0}$. Next, we introduce a
first-order logic expression: 
\begin{align}
\small
   & \mathcal{E}(\boldsymbol{z},\bx)=
    \left(h(\boldsymbol{x}^a) < 0 \bigvee h(\boldsymbol{x}^b)\geq 0\right)\bigvee\nonumber\\
    &~~~\Bigg( \bigwedge_{\bu_i^c\in U} \nabla_\bz h(\boldsymbol{x})\psi(\bz,\bu_i^c) < \beta-\alpha(h(\bx))\Bigg)
    \label{eqn:falsification_constraint_control_barrier}
\end{align}
where $\beta>0$, computation of which is detailed in Theorem \ref{thm:thm1}. $\mathcal{E}(\boldsymbol{z},\bx)$ returns ``true'' if there exists a pair $(\bz,\bx)$ that violates at least one of the CBF conditions and returns ``false'' otherwise. We use the class of Sequential Modulo Theories (SMT) solvers \cite{gao2012delta_smt} to generate counterexamples that satisfy the falsification constraint. These counterexamples are then added to either set $\mathcal{X}_0$ or set $\mathcal{X}_d$ depending on where the counterexamples lie. This process is repeated until the SMT solver is not able to generate any counterexamples that satisfy the falsification expression $\mathcal{E}(\bz,\bx)$ (i.e., when $\mathcal{E}(\bz,\bx)$ returns false). In this paper, we use the dReal algorithm as a SMT solver due to its $\delta-$completeness property that is defined as follows:
\begin{definition}
\normalfont \cite{gao2012delta_smt} Let $\phi(\bx)$ be a quantifier first-order logic constraint. An algorithm is said to be $\delta-$complete if for any $\phi(\bx)$, the algorithm always returns one of the following answers correctly: $\phi$ does not have a solution (unsatisfiable), or there is a solution $\bx=b$ that satisfies $\phi^\delta(b)$ where $\phi^\delta(b)$ is a small variation of the original constraint.
\label{defn:delta_completeness}
\end{definition}
Due to the $\delta$-completeness property of dReal algorithm, there is a guarantee that the SMT never fails to generate any counterexample if one exists \cite{gao2012delta_smt}.
\subsection{CBF for the unknown nonlinear system}
In this section, we prove that the CBF computed for the Koopman based bilinear model is also valid for the original nonlinear system with unknown dynamics.
\begin{theorem}
\normalfont If the SMT solver is not able to generate any counterexamples for the falsification constraint, the proposed learning framework in Section \ref{sec:learning} computes a valid CBF for the Koopman bilinear form \eqref{eqn:continous_bilinear_system} corresponding to the unknown nonlinear system \eqref{eqn:nonlinear_system}.
\label{thm:cbf_bilinear}
\end{theorem}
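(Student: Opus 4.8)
The plan is to show that once the SMT solver fails to return a counterexample to the falsification expression $\mathcal{E}(\bz,\bx)$ in \eqref{eqn:falsification_constraint_control_barrier}, the learned function $h$ satisfies, on the entire domain and not merely on the finite sample sets, the three defining conditions of a valid CBF for the bilinear form \eqref{eqn:continous_bilinear_system}: namely $h(\bx)\ge 0$ on $\mathcal{X}_0$, $h(\bx)<0$ on $\mathcal{X}_d$, and the dynamic inequality $\nabla_\bz h(\bx)\psi(\bz,\bu)\ge -\alpha(h(\bx))$ for some admissible input. The logical core is that $\mathcal{E}$ is constructed as the negation of exactly these conditions (restricted to the finite input grid $U$), so ``$\mathcal{E}$ returns false everywhere'' is by De Morgan precisely the assertion that the CBF conditions hold everywhere.

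First I would unwind the Boolean structure of \eqref{eqn:falsification_constraint_control_barrier}. The expression is a disjunction of the violation of the level-set conditions with a conjunction asserting that \emph{every} grid input $\bu_i^c\in U$ fails the dynamic inequality by at least the margin $\beta$. Its negation asserts that both level-set conditions hold \emph{and} that at least one $\bu_i^c\in U$ satisfies $\nabla_\bz h(\bx)\psi(\bz,\bu_i^c)\ge \beta-\alpha(h(\bx))$. So when the solver reports ``no counterexample,'' I obtain for every $\bx$ an input in the finite set $U$ achieving the dynamic inequality with a strict positive slack $\beta$, which in particular yields $\sup_{\bu\in\mathcal{U}}\nabla_\bz h(\bx)\psi(\bz,\bu)\ge \beta-\alpha(h(\bx))> -\alpha(h(\bx))$, matching the CBF definition for the bilinear system.

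The hard part will be bridging the gap between a \emph{discrete} verification — the SMT solver checks the logic on a first-order formula over the continuous domain, but the input is quantified only over the finite grid $U$ — and the continuous CBF requirement that involves a supremum over all $\bu\in\mathcal{U}$. This is exactly where the positive margin $\beta$ (to be supplied by Theorem \ref{thm:thm1}) does the work: I would argue that $\beta$ is chosen large enough to absorb the worst-case deviation incurred by replacing the continuous supremum over $\mathcal{U}$ with a maximum over the finite grid $U$, using the Lipschitz continuity of $\psi(\bz,\cdot)$ in the input together with the compactness of $\mathcal{U}$ and a bound on the grid resolution. I would also need to invoke the $\delta$-completeness of dReal (Definition \ref{defn:delta_completeness}) to certify that a ``false'' answer is trustworthy: the solver genuinely cannot find a violating pair, so no $(\bz,\bx)$ escapes the conditions.

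I would then close by assembling the three verified conditions into the CBF definition for \eqref{eqn:continous_bilinear_system}. Since $\bz=\Phi(\bx)$ and the dynamic inequality reads $\nabla_\bz h(\bx)\big(\mathcal{K}\bz+\sum_{i=1}^m C_i\bz u_i\big)\ge -\alpha(h(\bx))$, satisfaction for some admissible $\bu$ gives exactly $\sup_{\bu\in\mathcal{U}}\nabla_\bz h(\bx)\psi(\bz,\bu)\ge -\alpha(h(\bx))$, so $h$ is a valid CBF for the learned bilinear form, completing the proof. The comparison with the \emph{unknown} nonlinear system is deferred to the error-bound analysis; here the claim is only about the bilinear surrogate, so no appeal to the $\ell^2$ error bound is required at this stage.
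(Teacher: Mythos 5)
Your proposal matches the paper's proof in its essentials: negate $\mathcal{E}$ via De Morgan (justified by the $\delta$-completeness of dReal), extract some $\bu_s^c\in U$ with $\nabla_\bz h(\bx)\psi(\bz,\bu_s^c)\ge \beta-\alpha(h(\bx))$, and conclude $\sup_{\bu\in\mathcal{U}}\nabla_\bz h(\bx)\psi(\bz,\bu)\ge -\alpha(h(\bx))$. One correction, though: the ``hard part'' you anticipate --- using $\beta$ and the Lipschitz continuity of $\psi(\bz,\cdot)$ to absorb the error from replacing the supremum over $\mathcal{U}$ by a maximum over the finite grid $U$ --- is not needed and is not what the paper does. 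Since $U\subset\mathcal{U}$, a witness in the finite grid already \emph{lower-bounds} the supremum over $\mathcal{U}$, which is exactly the direction the CBF definition requires; the paper simply drops the margin via $\beta-\alpha(h(\bx))\ge-\alpha(h(\bx))$. The margin $\beta$ is reserved entirely for Theorem \ref{thm:thm1}, where it absorbs the modeling error $\|\nabla_\bx\Phi(\bx)F(\bx,\bu)-\psi(\bx,\bu)\|$ when transferring the CBF property to the unknown nonlinear system.
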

\begin{proof}
Since the dReal (SMT) algorithm is $\delta$-complete (Definition \ref{defn:delta_completeness}), if the SMT solver is not able to generate any counterexamples, then there does not exist any $\bz$ such that \eqref{eqn:falsification_constraint_control_barrier} holds. Consequently, the following expression is true
\begin{align}
 &\neg\mathcal{E}(\boldsymbol{z},\bx)=\left(h(\bx^a) \geq 0\right) \bigwedge \left(h(\bx^b)< 0\right)\bigwedge\nonumber\\
 &\left( \bigvee_{\bu_i^c\in U} \nabla_\bz h(\boldsymbol{x})\psi(\bz,\bu_i^c) \geq \beta-\alpha(h(\bx))\right) \nonumber  
\end{align}
where $\bx^a\in\mathcal{X}_0$, $\bx^b\in\mathcal{X}_d$ and $\bx\in\mathcal{X}$. Therefore, there exists a $s\in\{1,2\dots,R\}$ such that
\begin{align}
 \nabla_\bz h(\boldsymbol{x})\psi(\bz,\bu_s^c)  &\geq \beta-\alpha(h(\bx)\geq -\alpha(h(\bx)),  
   \label{eqn:theorem_1_proof}
\end{align}
where we have used the fact that $\beta>0$. Consequently,
\begin{align}
        \underset{\bu\in\mathcal{U}}{\sup}\; \nabla_\bz h(\boldsymbol{x})\psi(\bz,\bu)\geq  \nabla_\bz h(\boldsymbol{x})\psi(\bz,\bu_s^c)\geq -\alpha(h(\bx))\nonumber
\end{align}
which implies that $h(\bx)$ is a valid CBF for the bilinear system \eqref{eqn:continous_bilinear_system}.
\end{proof}
\begin{theorem}
\normalfont A valid CBF $h(\bx)$ for the Koopman based bilinear system \eqref{eqn:continous_bilinear_system} is also a valid CBF for the original nonlinear system
with unknown dynamics \eqref{eqn:nonlinear_system}. In other words, the learned CBF $h(\bx)$ satisfies
\begin{align}
   \frac{\partial h(\bx)}{\partial \bz}\nabla_\bx \Phi(\bx)F(\bx,\bu)\geq -\alpha(h(\boldsymbol{x})),  \nonumber
\end{align}
for all $\bx\in\mathcal{X}$ and $\bu\in\mathcal{U}$.
\label{thm:thm1}
\end{theorem}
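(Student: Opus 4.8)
The plan is to rewrite the left-hand side of the desired inequality as the bilinear-model term, which Theorem~\ref{thm:cbf_bilinear} already controls, plus a residual modeling error, and then to fix the falsification margin $\beta$ large enough to absorb that error over the compact domain. Accordingly, I would first introduce the lifted modeling error
\[
e(\bx,\bu) := \nabla_\bx\Phi(\bx)\,F(\bx,\bu) - \psi(\Phi(\bx),\bu),
\]
which measures the gap between the true lifted velocity $\dot{\bz}=\nabla_\bx\Phi(\bx)\dot{\bx}$ and its bilinear surrogate. In the exact infinite-dimensional Koopman picture this gap is identically zero, since $\mathcal{K}\Phi=\nabla_\bx\Phi\, f$ and $C_i\Phi=\nabla_\bx\Phi\, g_i$; thus $e$ isolates precisely the finite-dimensional truncation and EDMD-fitting error. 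Writing $\frac{\partial h(\bx)}{\partial\bz}=\nabla_\bz h(\bx)$, I would split
\[
\nabla_\bz h(\bx)\,\nabla_\bx\Phi(\bx)F(\bx,\bu) = \nabla_\bz h(\bx)\,\psi(\Phi(\bx),\bu) + \nabla_\bz h(\bx)\,e(\bx,\bu).
\]

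Next, from the proof of Theorem~\ref{thm:cbf_bilinear} there is, for each $\bx$, an input $\bu_s^c\in\mathcal{U}$ satisfying $\nabla_\bz h(\bx)\,\psi(\Phi(\bx),\bu_s^c)\geq\beta-\alpha(h(\bx))$. Evaluating the split at $\bu=\bu_s^c$ and applying Cauchy--Schwarz to the residual gives
\[
\nabla_\bz h(\bx)\,\nabla_\bx\Phi(\bx)F(\bx,\bu_s^c) \geq \beta-\alpha(h(\bx)) - \|\nabla_\bz h(\bx)\|\,\|e(\bx,\bu_s^c)\|.
\]
It therefore suffices to take $\beta\geq L_h\,\varepsilon$, where $L_h:=\sup_{\bx\in\mathcal{X}}\|\nabla_\bz h(\bx)\|$ and $\varepsilon:=\sup_{(\bx,\bu)\in\mathcal{X}\times\mathcal{U}}\|e(\bx,\bu)\|$; both suprema are finite because $\mathcal{X}\times\mathcal{U}$ is compact and all maps involved are continuous. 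With this choice the right-hand side is bounded below by $-\alpha(h(\bx))$, and since $\bu_s^c\in\mathcal{U}$ the supremum over admissible inputs also satisfies the bound, which is exactly the CBF inequality claimed for \eqref{eqn:nonlinear_system}. (The statement is to be read as holding at the selected input $\bu_s^c$, hence for the supremizing input, which is what makes $h$ a valid CBF.)

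The main obstacle is to produce an explicit and tunable bound on $\varepsilon$, rather than merely its finiteness; this is where the $\ell^2$ error characterization and the Lipschitz constant of the observables enter. I would obtain it by noting that the dynamics loss $\mathcal{L}_{\text{dyn}}$ drives the discrete one-step residual $\|\Phi(\bx_{k+1})-\mathcal{K}_d\Phi(\bx_k)-\sum_i D_i\Phi(\bx_k)[\bu_k]_i\|$ to (near) zero on $\mathcal{D}$; dividing by $\Delta t$ and using $\mathcal{K}_d=\mathcal{K}\Delta t+I$, $D_i=C_i\Delta t$ converts this into a continuous-time residual $\|e(\bx_k,\bu_k)\|$ controlled by the fit quality plus an $O(\Delta t)$ Euler term. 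To pass from the data points to an arbitrary $(\bx,\bu)$, I would use Lipschitz continuity: $F$ is $K_F$-Lipschitz on $\mathcal{X}\times\mathcal{U}$ by assumption, and $\Phi$, $\nabla_\bx\Phi$ are Lipschitz with constants set by the network weights, so $e$ is Lipschitz in $(\bx,\bu)$ with a constant expressible through $K_F$ and the observable Lipschitz constant. Covering the compact domain by balls whose radius equals the data resolution then yields a bound of the form $\varepsilon\leq(\text{data residual})+(\text{Lipschitz constant})\times(\text{resolution})+O(\Delta t)$, which transparently exhibits the abstract's claim that $\varepsilon$, and hence the required margin $\beta=L_h\varepsilon$, is partially tunable through the Lipschitz constant of the Koopman observables. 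The delicate points are making the covering argument uniform over $\mathcal{X}\times\mathcal{U}$ and cleanly separating the irreducible truncation error from the resolution-dependent term.
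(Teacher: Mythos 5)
Your proposal is correct and follows essentially the same route as the paper: both decompose $\nabla_\bz h(\bx)\,\nabla_\bx\Phi(\bx)F(\bx,\bu)$ into the bilinear term controlled by Theorem~\ref{thm:cbf_bilinear} plus the lifted modeling error, bound that error uniformly via Lipschitz continuity and proximity to a training sample (your covering argument is just a cleaner formalization of the paper's triangle inequality with $\delta$, $\tau$, and $\mu$), and choose $\beta$ large enough to absorb it. Your explicit remark that the inequality is really established at the selected input $\bu_s^c$ (hence for the supremum) is a point the paper's own proof glosses over, and your program for explicitly bounding $\varepsilon$ goes beyond what the paper actually carries out.
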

\begin{proof}
Let $(\by,\bv)$ be a training sample and $(\bx,\bu)$ be an arbitrary point belonging to $\mathcal{X}\times\mathcal{U}$ (we use the notation $\boldsymbol{a}$ and $\boldsymbol{b}$, i.e., $(\boldsymbol{a},\boldsymbol{b}):=[\boldsymbol{a}^\mathrm{T},\;\boldsymbol{b}^\mathrm{T}]^\mathrm{T}$). Let $\delta>0$ be such that $\|(\bx,\bu)-(\by,\bv)\|\leq\delta$. Further, let 
\begin{align}
    \mu:=\underset{(\by,\bv)\in\mathcal{D}}{\max}\;\;\|\nabla_\by\Phi(\by) F(\by, \bv)-\psi(\by, \bv)\|
\end{align}
and $M>0$ such that $\|\frac{\partial h(\bx)}{\partial \bz}\|<M$. In addition, let $\tau:= \max\{\|(\by,\bu)\|: (\by,\bu) \in \mathcal{D}\}$. Then, using the triangle inequality, we have
\begin{align}
&\|\nabla_\bx\Phi(\bx) F(\bx,\bu)-\psi(\bx, \bu)\| \nonumber\\
&\;\;\;\leq\|\nabla_\bx\Phi(\bx) F(\bx, \bu)-\nabla_\by\Phi(\by) F(\by, \bv)\|+\nonumber\\
&\;\;\;\;\;\;\|\nabla_\by\Phi(\by) F(\by, \bv)-\psi(\by, \bv)\|+\|\psi(\by, \bv)-\psi(\bx, \bu)\|\nonumber \\
&\;\;\;\leq\|\nabla_\bx\Phi(\bx) F(\bx, \bu)-\nabla_\bx\Phi(\bx) F(\by, \bv)\|+\nonumber\\
&\;\;\;\;\;\;\|\nabla_\bx\Phi(\bx) F(\by, \bv)-\nabla_\by\Phi(\by) F(\by, \bv)\|+\nonumber\\
&\;\;\;\;\;\;\;\|\nabla_\by\Phi(\by) F(\by, \bv)-\psi(\by, \bv)\|\nonumber\\
&\;\;\;\;\;\;\;+\|\psi(\by, \bv)-\psi(\bx, \bu)\| 
\label{eqn:beta}
\end{align}
Since $\Phi$, $\psi$ and $F$ are Lipschitz continuous functions with Lipschitz constants $K_{\Phi},\; K_{\psi}$ and $K_F$, respectively, we have
\begin{align}
    &\|F(\bx,\bu)-F(\by,\bv)\|\leq K_F\|(\bx,\bu)-(\by,\bv)\|\leq K_F\delta,\nonumber\\
    &\nabla_\bx\Phi(\bx)\leq K_\Phi,\;\;\nabla_\by\Phi(\by)\leq K_\Phi,\;\nonumber\\
    &\|F(\by,\bv)\|\leq K_F\|(\by,\bv)\|\leq K_F\tau\nonumber\\
    &\|\psi(\by,\bv)-\psi(\bx,\bu)\|\leq K_\psi\|(\by,\bv)-(\bx,\bu)\|\leq K_\psi\delta\nonumber\\
    & \|\nabla_\by\Phi(\by) F(\by, \bv)-\psi(\by, \bv)\|\leq \mu\nonumber
\end{align}
Therefore, \eqref{eqn:beta} becomes
\begin{align}
  &\|\nabla_\bx\Phi(\bx) F(\bx,\bu)-\psi(\bx, \bu)\|\nonumber\\
  &\leq K_\Phi K_F\delta+2K_\Phi K_F\tau+\mu+K_\psi\delta<\frac{\beta}{M}  
  \label{eqn:beta_computation}
\end{align}
where $\beta>0$ is a sufficiently large constant. Then, choosing $\beta$ to satisfy \eqref{eqn:beta_computation} implies that
\begin{align}
&\nabla_\bx h(\bx)\psi(\bx,\bu)-\nabla_\bx h(\bx)F(\bx,\bu)\nonumber\\
&\leq\left\|\frac{\partial h}{\partial \bz}\right\|\|\nabla_\bx \Phi(\bx)F(\bx, \bu)-\psi(\bx, \bu)\|<M \frac{\beta}{M}=\beta\
\label{eqn:beta_final}
\end{align}
In view of \eqref{eqn:beta_final} and Theorem \ref{thm:cbf_bilinear}, we have
\begin{align}
\nabla_\bx h(\bx)F(\bx,\bu)&>\nabla_\bx h(\bx)\psi(\bx,\bu)-\beta \nonumber \\ 
&>-\alpha(h(\bx)).
\end{align}
Hence, $h(\bx)$ satisfies the Koopman based equivalent form of CBF \eqref{eqn:koopman_equivalent_quadratic_problem} and the result follows.
\end{proof}
\begin{remark}
\normalfont The value of $\beta$ is inferred based on the prior knowledge of the system. Furthermore, the Lipschitz constant $K_\Phi$ of the Koopman based observables can be tuned by using spectral normalization of neural networks \cite{bartlett2017spectrally}.
\end{remark}
\subsection{Spectral normalization of Koopman based observables}
Consider a function $p(\bx,\theta)$ that is characterized by a neural network as follows:
\begin{align}
    p(\bx,\theta)=W^{L+1}(\gamma(W^{L}(\gamma(W^{L-1}(\dots\gamma^1\bx)\dots))))
\end{align}
where $\theta:=\{W^1,W^2,\dots W^L\}$ constitutes the collection of weights of the neural network and $\gamma$ is the activation function. Spectral normalization stabilizes the neural network training by constraining the Lipschitz constant of a function which is characterized by a neural network \cite{bartlett2017spectrally}.
We can upper bound the Lipschitz constant of the neural network by leveraging the spectral norm of each layer $q^l(\bx)=\gamma(W^l \bx)$ for layer $l$ as follows. For the linear map $q(\bx)=W\bx$, the spectral norm denoted by $\|q\|_{\text{lip}}$ is given by $\|q\|_{\text{lip}}=\underset{\bx}{\sup}\;\sigma(\nabla_\bx q(\bx))=\sigma (W)$. For activation functions such as $\texttt{Tanh}$ and $\texttt{ReLU}$, the spectral norm is equal to 1. Therefore, 
\begin{align}
    \|p(\bx,\theta)\|_\text{lip}\leq  \|q^{L+1}\|_\text{lip} \cdot \|\gamma\|_\text{lip}\dots \|q^{1}\|_\text{lip}=\prod_{i=1}^{L+1}\sigma(W^i).
    \label{eqn:lipschitz_bound}
\end{align}
\begin{prop}
\normalfont For a given $K_\Phi>0$, if the collection of weights $\theta$ of the neural network that characterizes $\Phi(\bx)$  are updated as follows:
\begin{align}
    \Bar{W}=W/\sigma(W)\cdot K_\Phi^{\frac{1}{L+1}},
\end{align}
then the Lipschitz constant of $\Phi(\bx)$ is upper bounded by $K_\Phi$.
\end{prop}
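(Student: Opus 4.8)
The plan is to combine the scalar homogeneity of the spectral norm with the product bound already established in \eqref{eqn:lipschitz_bound}. First I would record the elementary fact that for any positive scalar $c$ and matrix $W$, the largest singular value obeys $\sigma(cW)=c\,\sigma(W)$, since multiplying a matrix by $c$ scales all of its singular values by $c$. Applying this to the proposed update rule, I compute the spectral norm of each rescaled weight matrix directly:
\[
\sigma(\bar{W}) = \sigma\!\left(\frac{W}{\sigma(W)}\, K_\Phi^{1/(L+1)}\right) = \frac{K_\Phi^{1/(L+1)}}{\sigma(W)}\,\sigma(W) = K_\Phi^{1/(L+1)}.
\]
The key observation is that after normalization \emph{every} layer has exactly the same spectral norm $K_\Phi^{1/(L+1)}$, regardless of its original weights.

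Next I would invoke the bound \eqref{eqn:lipschitz_bound}, which states that the Lipschitz constant of a feedforward network is at most the product of the per-layer spectral norms, using that the activation functions considered (e.g.\ \texttt{Tanh}, \texttt{ReLU}) have unit Lipschitz constant. Since $\Phi$ is precisely such a network, substituting the normalized weights yields
\[
\|\Phi(\bx)\|_\text{lip} \leq \prod_{i=1}^{L+1} \sigma(\bar{W}^i) = \prod_{i=1}^{L+1} K_\Phi^{1/(L+1)} = \left(K_\Phi^{1/(L+1)}\right)^{L+1} = K_\Phi,
\]
which is exactly the asserted upper bound.

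The argument is essentially mechanical once the two ingredients---homogeneity of $\sigma(\cdot)$ and the product bound \eqref{eqn:lipschitz_bound}---are in hand, so I do not expect a genuine obstacle. The only points requiring mild care are bookkeeping and well-posedness: one must confirm that the rescaling is applied to all $L+1$ weight matrices, so that the $L+1$ exponents of $1/(L+1)$ collect into a single power that cancels the root, and one must assume $\sigma(W)>0$ (equivalently $W\neq 0$) so that the division defining $\bar{W}$ is well defined. I would also note for completeness that the per-layer bound silently inherits the assumption $\|\gamma\|_\text{lip}=1$ from the preceding discussion, which is what permits the activation factors to drop out of the product.
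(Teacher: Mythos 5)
Your argument is correct and coincides with the paper's own proof: both compute $\sigma(\bar{W}^i)=K_\Phi^{1/(L+1)}$ via homogeneity of the spectral norm and then apply the product bound \eqref{eqn:lipschitz_bound} to obtain $\|\Phi(\bx)\|_{\text{lip}}\leq K_\Phi$. Your added remarks on well-posedness ($\sigma(W)>0$) and on the unit Lipschitz constant of the activation are sensible but do not change the route.
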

\begin{proof}
Using $ \Bar{W}=W/\sigma(W)\cdot K_\Phi^{\frac{1}{L+1}}$, we have
\begin{align}
    \sigma(\Bar{W}^i)=\sigma(W^i/\sigma(W^i)\cdot K_\Phi^{\frac{1}{L+1}})=K_\Phi^{\frac{1}{L+1}}.
\end{align}
Hence, using \eqref{eqn:lipschitz_bound}, we have
\begin{align}
         \|\Phi(\bx)\|_\text{lip}\leq \prod_{i=1}^{L+  1}\sigma(\Bar{W}^i)=\prod_{i=1}^{L+1} K_\Phi^{\frac{1}{L+1}}=K_\Phi
\end{align}
Therefore, the result follows.
\end{proof}
\begin{remark}
\normalfont Note that for a given $K_\Phi>0$, there is always a trade off between how small $K_\Phi$ can be made and how accurately one can learn the set of observables $\Phi$ that represents the lifted state for the Koopman based model \eqref{eqn:discrete_bilinear_system}.
\end{remark}
\subsection{Algorithm}
Algorithm \ref{alg:learn_bilinear_cbf} summarizes our proposed approach to compute control inputs that would guarantee safety for the unknown nonlinear system \eqref{eqn:nonlinear_system}.
\begin{algorithm}[H]
\footnotesize
 \caption{Safe controller for unknown nonlinear system}
\hspace*{\algorithmicindent} \textbf{Input:} $N_d$, $\Delta t$, $n$, $m$, $T$, $U$, $\mathcal{D}=\{\boldsymbol{x}_k,\;\boldsymbol{u}_k\}_{k=1}^{N_d}$, $\bx_\text{init}$, $\bx_\text{goal}$ \\
 \hspace*{\algorithmicindent} \textbf{Output:} $\Phi$, $\Phi^{-1}$, $\mathcal{K}_d$, $D_i\;\forall\;i\in\{1,\dots,m \}$, $U_S$ 
\begin{algorithmic}[1]
\State $\textbf{function}\;\;\texttt{LEARNING}(\mathcal{D})$
\State $\quad\textbf{Repeat:}$
\State $\quad\Phi,\;\Phi^{-1}\gets\text{NN}_{\theta}(\boldsymbol{x},\boldsymbol{z})$\Comment{Encoder and decoder}
\State $\quad h(\bx) \leftarrow \mathrm{NN}_{\theta}(\bx)$\Comment{Candidate CBF}
\State $\quad \left[\mathcal{K}_d, \dots , D_{m}
\right]\gets\Gamma \eta^{T}\left(\eta \eta^{T}\right)^{-1}$\Comment{Update bilinear system}
\State $\quad \text{Compute total loss } \mathcal{L}(\theta)\;\;\text{from Eqn. \eqref{eqn:total_loss}}$
\State $\quad \theta \leftarrow \theta+\alpha_{NN} \nabla_{\theta} \mathcal{L}(\theta)$\Comment{Update weights}
\State $\quad\textbf{until}\; \text{convergence}$
\State $\quad\textbf{return}\;\Phi,\;\Phi^{-1},\;\left[\mathcal{K}_d, \dots , D_{m}
\right],\;h(\bx) $
\State $\textbf{end function}$
\State $\textbf{function}\;\;\texttt{FALSIFICATION}(X)$
\State $\quad\text{Use dReal algorithm}\; \text{to verify conditions \eqref{eqn:falsification_constraint_control_barrier}}$
\State $\quad\textbf{return}\;\text{satisfiability} $\Comment{True or False}
\State $\textbf{end function}$
\State $\textbf{function}\;\;\texttt{SAFE CONTROL}(\boldsymbol{x})$
\State $\quad \Phi,\;\Phi^{-1},\;\left[\mathcal{K}_d, \dots , D_{m}
\right],\;h(\bx)\gets\texttt{LEARNING}(\mathcal{D})$
\State $\quad\bx\gets\bx_{\text{init}}$
\State $\quad\textbf{Repeat:}$
\State $\quad \boldsymbol{u}(\bx)\gets\text{Eqn. }\eqref{eqn:koopman_quadratic_problem},\;\;U\gets \textnormal{Append}(\boldsymbol{u}(\bx))$
\State $\quad \boldsymbol{x}_{\text{next}}\gets \text{Apply control input }\;\boldsymbol{u}(\bx)\;\text{to (Eqn.  }\eqref{eqn:koopman_equivalent_quadratic_problem})$
\State $\quad \boldsymbol{x}\gets\boldsymbol{x}_{\text{next}}$
\State $\quad\textbf{until}\; \text{convergence of }\bx\rightarrow\bx_{\text{goal}} $
\State $\quad\textbf{return}\;U_S $
\State $\textbf{end function}$
\State $\textbf{function}\;\;\text{\texttt{MAIN}()}$
\While{\text{Satisfiable}}
\State $\quad\text{Add counterexamples to}\; \mathcal{D}$
\State $\quad \Phi,\;\Phi^{-1},\;\left[\mathcal{K}_d, \dots , D_{m}
\right],\;h(\bx)\gets\texttt{LEARNING}(\mathcal{D})$
\State $\quad S\gets\texttt{FALSIFICATION}(\mathcal{D})$
\EndWhile
\State $U_S\gets\texttt{SAFE CONTROL}(\boldsymbol{x}_{\text{init}})$
\State $\textbf{end function}$
 \end{algorithmic}
\label{alg:learn_bilinear_cbf}
\end{algorithm}
The function $\texttt{LEARNING}$ (Lines 1 to 10 of Algorithm \ref{alg:learn_bilinear_cbf}) takes the data snapshots $\mathcal{D}$ of state-input pairs and returns the learned bilinear model and the CBF. For a given learned bilinear model, the function $\texttt{FALSIFICATION}$ (Lines 11 to 13 of Algorithm \ref{alg:learn_bilinear_cbf}) returns whether the falsification constraint \eqref{eqn:falsification_constraint_control_barrier} is true or false. If it returns false, the SMT solver generates a counterexample that satisfies \eqref{eqn:falsification_constraint_control_barrier}. Given the current state $\bx$, the function $\texttt{SAFE CONTROL}$ (Lines 15 to 24 of Algorithm \ref{alg:learn_bilinear_cbf}) uses the learned observables, CBF and the bilinear model to generate an input by solving the QP \eqref{eqn:koopman_quadratic_problem}. This input is then fed back to the unknown nonlinear system \eqref{eqn:nonlinear_system} to get the subsequent state $\bx_{\text{next}}$. This process is repeated until the desired state is reached. 

\section{Results\label{sec:results}}
In this section, we provide numerical simulations to validate the ability of our proposed approach to compute control inputs that would keep the state of the unknown nonlinear system in a given safe set. The nominal controller $k_L(\bz)$ for the learned Koopman based bilinear system \eqref{eqn:continous_bilinear_system} is computed using the MPC package $\texttt{do-mpc}$ \cite{lucia2017rapid_dompc} in Python. 
We consider the differential drive robot with dynamics:
\begin{align}
    &\dot{x}=r\text{sin}(\theta),\;\;\dot{y}=r\text{cos}(\theta),\;\;\dot{\theta}=(r/L)\omega
\end{align}
where $r$ is the radius of the wheels, $L$ is the distance between wheels, $\bx=[x,\;y,\;\theta]^\mathrm{T}$ is the state, $\bu=\omega$ is the control input. We choose $r=0.1\mathrm{m},\;L=0.1\mathrm{m},\;N=5,\;R=10,\;\beta_1=2,\;\beta_2=0.05,\;\beta_3=1,\;\mathcal{X}=[-5,5]^2\times[-0.2,0.2],\;\mathcal{U}=[-1,1]$. Further, we choose $\texttt{Tanh}$ to be the activation function with learning rate set to $10^{-3}$. The set $U$ is generated by uniformly sampling $M$ inputs from $\mathcal{U}$. The initial conditions $\bx_0$ are sampled from a square with center at $(-2.5\mathrm{m},-2.5\mathrm{m})$ and side equal to $2\mathrm{m}$. A circular obstacle is placed with center at origin and radius equal to $1\mathrm{m}$. Fig. \ref{fig:cbf_koopman} shows 50 different safe trajectories generated using our approach starting from a fixed $\bx_0$ (sampled uniformly from square $[-3.5,-1.5]\times[-3.5,-1.5]$) to a goal position $\bx_\text{goal}$ which is chosen to be $(2.5\mathrm{m},2.5\mathrm{m})$.
\begin{figure}[h]
\centering
\includegraphics[width=8.5cm]{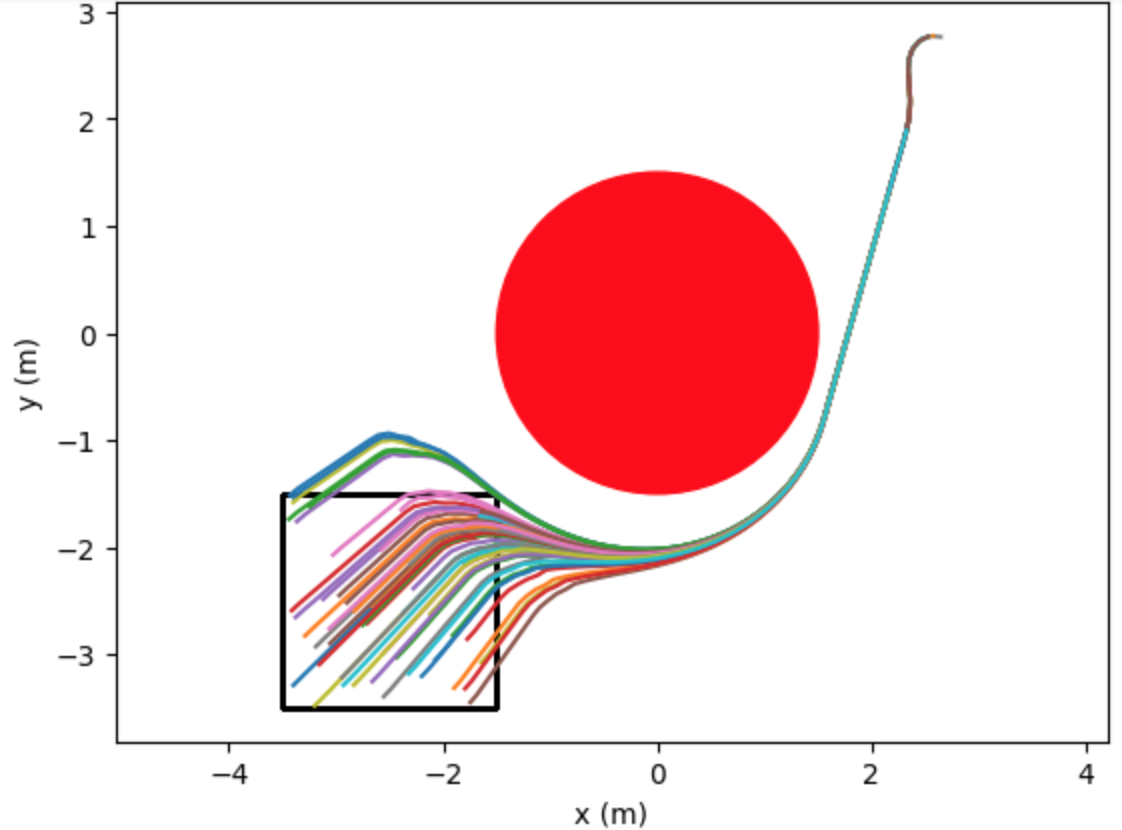}
\caption{50 safe trajectories generated using our proposed approach from initial conditions sampled from a square region to the same goal position.}
\label{fig:cbf_koopman}
\end{figure}
\section{Conclusion\label{sec:conclusions}}
In this paper, we proposed a learning framework to simultaneously learn a Koopman based bilinear model for an unknown nonlinear system and a valid Control Barrier Function for the learned model. We proved that the valid CBF for the bilinear model also acts as a valid CBF for the unknown nonlinear system. Through numerical simulations, we verified our proposed approach on a differential robot for a collision avoidance problem. In our future work, we will consider extending our approach to CBFs to systems with (unknown) dynamics of higher relative degree.
 \bibliography{main.bib}

\end{document}